\newtheorem{theorem}{Theorem}[section]
\newtheorem{lemma}{Lemma}[section]
\newcommand{\R}{\mathbb{R}}
\newcommand{\C}{\mathbb{C}}
\newcommand{\grad}{\nabla}
\newcommand{\ep}{\varepsilon}
\newcommand{\ov}{\overline}
\begin{document}
\setlength{\parskip}{1mm}
\setlength{\oddsidemargin}{0.1in}
\setlength{\evensidemargin}{0.1in}
\lhead{}
\rhead{}
\rfoot{}
\lfoot{}

\begin{center}
{\bf \Large \noindent Analysis of new direct sampling indicators for far-field measurements }\\
\vspace{0.2in}
Isaac Harris  \\
Department of Mathematics\\
Purdue University \\
West Lafayette, IN 47907\\
E-mail: harri814@purdue.edu

\vspace{0.2in}

Andreas Kleefeld \\
Forschungszentrum J\"{u}lich GmbH \\
J\"{u}lich Supercomputing Centre \\
Wilhelm-Johnen-Stra{\ss}e, 52425 J\"{u}lich, Germany\\
Email: a.kleefeld@fz-juelich.de\\
\end{center}

\begin{abstract}
\noindent This article focuses on the analysis of three direct sampling indicators which can be used for recovering scatterers from the far-field pattern of time-harmonic acoustic measurements. These methods fall under the category of sampling methods where an indicator function is constructed using the far-field operator. Motivated by some recent work, we study the standard indicator using the far-field operator and two indicators derived from the factorization method. We show equivalence of two indicators previously studied  as well as propose a new indicator based on the Tikhonov regularization applied to the far-field equation for the factorization method. Finally, we give some numerical examples to show how the reconstructions compare to other direct sampling methods.  
\end{abstract}

{\bf \noindent Keywords}:  Inverse scattering, sampling methods, far-field pattern, factorization method, Tikhonov regularization\\

{\bf \noindent AMS subject classifications:} 35J05, 78A46 

\section{Introduction}\label{intro}
In this paper, the inverse problem of recovering the shape of an unknown obstacle from far-field measurements is considered. Such a problem arises in many physical applications where one needs to detect structures in a given medium. 
This has many applications such as medical imaging or detecting defects in complex material structures. 
An important question is if one can construct fast stable algorithms to reconstruct the obstacle with little to no a priori information. {\color{black} One way} to achieve this is to employ a {\it qualitative method} (otherwise known as non-iterative or direct methods) such as the linear sampling method (LSM) which was first proposed in \cite{CK} or the factorization method (FM) which has been  applied to acoustic, electromagnetic, and electrostatic problems in \cite{kirschbook} as well as the generalized linear sampling method (GLSM) introduced in \cite{GLSM} which combines theoretical elements of both the LSM and FM. These methods have been used to solve multiple inverse shape problems for elliptic \cite{armin}, hyperbolic \cite{HLM}, and parabolic \cite {lsm-heat} systems. The LSM studies the far-field equation and {\color{black}uses the fact that} the (regularized) solution to the far-field equation should become unbounded in the exterior of the obstacle. The FM gives that the far-field pattern of the fundamental solution to the Helmholtz equation is in the range of a self-adjoint compact operator defined by the far-field operator if and only if the sampling point is in the region of interest. Therefore, appealing to Picard's Criterion gives an indicator that becomes unbounded in the exterior of the obstacle by using the eigensystem of a known operator. There has also been some work in using numerical regularization techniques when applying the FM in \cite{pele}. Recently in \cite{cakoni-harris} and \cite{fm-crack} the FM has been extended to reconstructing defects in a known inhomogeneous media. The GLSM studies the minimizer of a functional to solve the far-field equation used in the LSM involving the far-field operator and it is shown that the penalty term in the functional becomes unbounded in the exterior of the obstacle. In the GLSM the penalty term can be defined by using the operator from the FM. All of these methods allow one to construct an indicator function $W(z)$ to recover the unknown obstacle where $W(z)$ is positive in the interior of the obstacle and  is (approximately) zero on the exterior of the obstacle. Therefore, to reconstruct the obstacle one can plot the computable function $W(z)$ in a region where the unknown scatterer is assumed to be located. Lately, there has been some interest in analyzing a so-called direct sampling method (DSM) using the theoretical framework used in the FM {\color{black}for far-field data (see for e.g. \cite{Liu} and \cite{dsm-fm}). These methods have also been studied for there applicability to reconstruct scatterers from near-field data \cite{rtm}.} The DSM recovers objects by constructing an indicator function by evaluating an inner-product with the measured far-field operator and the known far-field pattern of the fundamental solution to Helmholtz equation. This allows one to prove that these indicators are stable and computationally cheap to implement. Similar to the other sampling methods one wishes to show that the DSM indicator has a specific behavior outside the region of interest, where instead of blowing up outside the obstacle it takes smaller values in the exterior. 

Recently, in \cite{Liu} the DSM using the symmetric factorization of the far-field operator for time-harmonic acoustic data has been studied and the indicator function 
\[z \longmapsto \left| (F \phi_z , \phi_z )_{L^2(\mathbb{S})} \right|\]
is proposed, where $F$ is the far-field operator and $\phi_z$ is the far-field pattern of the fundamental solution {\color{black} located at $z$} to Helmholtz equation.
The analysis in this paper works for sound soft/hard, impedance obstacles as well as penetrable isotropic scatterers. In \cite{Liu} it is shown that the indicator is strictly positive and decays as $|z| \to \infty$. The decay is given by bounding the indicator by the Bessel functions evaluated at $|z-x|$ where $x$ is in the obstacle or on the boundary of the obstacle. It is also shown that the indicator is equivalent to reverse time migration \cite{rtm} and the orthogonal sampling method \cite{osm}. Similar analysis was employed in \cite{elastic} to validate the DSM for elastic waves as well as the study of its applicability for limited-aperture data. Unlike the other sampling methods the DSM does not require solving an ill-posed problem or minimizing a functional {\color{black}at each sampling point} which makes it computationally cheaper than its counterparts. Stability results can be given by simple calculations using basic analytical tools. This makes these methods advantageous to use in order to recover unknown obstacles. 

Here we propose the use of two indicator functions based on the FM. This research is motivated by the recent paper \cite{dsm-fm} where similar indicators are introduced and analyzed for the multi-static response matrix which is the discrete version of the far-field operator. The indicators in \cite{dsm-fm} are based on the theoretical setting of the FM and one of the proposed indicators is based on a Neumann series approximation to the FM equation. In this paper, we will analyze the indicator function which uses the FM operator and prove that the indicator is equivalent to DSM studied in \cite{Liu} and derive a new method based on the Tikhonov regularization. The main idea for deriving the new DSM is based on the factorization of the far-field operator given by $F$ which uses $|F|^{1/2} = (F^*F)^{1/4}$. 
If we let the unknown obstacle be denoted by $D$, then the FM gives that $ \phi_z  \in \mathrm{Range}(|F|^{1/2}) \iff z \in D$
and our method looks at an approximation of the Tikhonov regularized solution operator of the equation $|F|^{1/2} g_z =  \phi_z\,.$

The rest of the paper is structured as follows. First, we provide a precise problem statement in Section \ref{setup} and give some preliminary result. In Section \ref{DSM1} we analyze the first indicator which is defined by an inner-product involving $|F|^{1/2}$ and $\phi_z$ which was originally considered in \cite{dsm-fm}. The authors of \cite{dsm-fm} did not succeed in showing that the indicator is equivalent to the one proposed in \cite{Liu} and here we will show that the indicators are equivalent. In Section \ref{DSM2} we expand on the idea in \cite{dsm-fm} to use an approximate solution operator to define a new indicator function. In order to define our approximate solution operator we consider {\color{black}using} Tikhonov regularization. Section \ref{numerics} is devoted to giving some numerical examples of recovering scatterers in two and three dimensions as well as comparing the new indicators with the DSM indicators to show that our reconstructions can compete and in some cases outperform the DSM indicator considered in \cite{Liu}. In our experiments we see that the reconstructions take seconds to compute making these methods computationally cheap to implement and analytically rigorous. Lastly, a short summary is given in Section \ref{summary}.

\section{Problem statement}\label{setup}
In our analysis, we will consider the time-harmonic acoustic scattering problem for anisotropic media.  The LSM and FM have been applied to the inverse scattering problem of recovering the scatterer from the 
measured far-field pattern in \cite{lsm-aniso} and \cite{fm-aniso}, respectively. We now derive two DSMs for this inverse shape problem. 
To this end, we formulate the direct scattering problems under consideration in $\R^d$ for $d=2$ or $d=3$. The scatterer $D \subset \R^d$ may be made up of multiple simply connected components with $C^2$ boundary 
$\partial D$ having unit outward normal $\nu$. We consider the scattering by a plane incident wave $u^i(x, \hat{y}) = \mathrm{e}^{\mathrm{i} k {x}  \cdot \hat{y} }$ for a given incident direction $\hat{y} \in \R^d$ such that $| \hat{y} |=1$ and wave number $k>0$. This gives that the radiating scattered field $u^s(x, \hat{y})$ and the total field $u=u^s+u^i \in H^1_{loc}(\R^d)$ satisfy the boundary value problem
\begin{eqnarray}
\hspace{-1cm}\Delta u^s+k^2u^s = 0  \, \,  \text{ in } \, \, \R^d \setminus \overline{D} \quad &\text{and }& \quad  \grad \cdot A(x) \grad u +k^2 n(x) u = 0  \,\,\text{ in } \,\,  D \label{direct1} \\
 u=u^s+u^i \quad &\text{and } &\quad {\partial_{\nu_A} u}= {\partial_\nu} (u^s+u^i )     \,\, \textrm{ on }   \,\, \partial D  \label{direct2}
 \end{eqnarray}
along with the Sommerfeld radiation condition
\begin{eqnarray*}
\lim\limits_{r \rightarrow \infty} r^{(d-1)/2} \left( {\partial _r u^s} -\mathrm{i}k u^s \right)=0\,.
\end{eqnarray*}
Here the normal and conormal derivative on the boundary is given by ${\partial_{\nu} \varphi}=\nu \cdot   \nabla \varphi$ and ${\partial_{\nu_A} \varphi}=\nu \cdot A  \nabla \varphi$,
respectively, where $\varphi$ is a sufficiently regular function defined on $\partial D$. The Sommerfeld radiation condition is assumed to be satisfied uniformly with respect to $\hat x=x/r$, $r=|x|\,.$ 

The real-valued coefficient matrix $A(x) \in C^{1}(D,\R^{d \times d})$ is symmetric uniformly positive definite in $D$ and the scalar function $n(x) \in C(D)$ is real-valued, denotes the material parameters of the obstacle $D$ where we assume that $I-A$ and $1-n$ are supported in $D$. Under these assumptions we have that the scattering problem \eqref{direct1}--\eqref{direct2} with the radiation condition is well-posed for all $\hat{y}$. It can be shown that the radiating scattered field $u^s$ has the expansion 
\[u^s(x, \hat{y})=\gamma \frac{\mathrm{e}^{\mathrm{i}k|x|}}{|x|^{(d-1)/2}} \left\{ u^{\infty}(\hat{x},  \hat{y} ) + \mathcal{O} \left( \frac{1}{|x|}\right) \right\} \; \text{  as  } \;  |x| \to \infty
\]
where the constant 
\[\gamma = \frac{ \mathrm{e}^{\mathrm{i}\pi/4} }{ \sqrt{8 \pi k} } \,\,\, \text{in} \,\,\, \R^2 \quad \text{and} \quad  \gamma = \frac{1}{ 4\pi } \,\,\, \text{in} \,\,\, \R^3 \]
with $u^{\infty}(\hat{x},  \hat{y})$ being the {far-field pattern} depending on the incident direction $ \hat{y}$ and the observation direction $\hat x$. 
We now define the far-field operator $F:L^2(\mathbb{S}) \longmapsto  L^2(\mathbb{S})$
\begin{eqnarray}
(F g)(\hat{x})=\int_{\mathbb{S}} u^{\infty}(\hat{x}, \hat{y}  ) g(\hat{y} ) \, \mathrm{d}s(\hat{y} )\,, \quad \text{ where } \, \, \mathbb{S}=\text{unit sphere/circle.} \label{ffo}
\end{eqnarray}
Since the far-field pattern is analytic it is clear that $F$ is a compact operator. 
The two indicators we study are given by 
\[ \left( |F|^{1/2} \phi_z , \phi_z \right)_{L^2(\mathbb{S})} \quad \text{and} \quad \big\| {\color{black}P_{\alpha,\ep}}(|F|)\phi_z \big\|^2_{L^2(\mathbb{S})}\,, \quad \text{where} \quad \phi_z = \mathrm{e}^{ - \mathrm{i} k \hat{x}  \cdot z}\,.\]
Since $F^*F$ is a positive self-adjoint compact operator we can define $|F|^{1/2} = (F^*F)^{1/4}$ via the spectral decomposition.  Here, {\color{black}$P_{\alpha,\ep} (t)$ is a polynomial defined on the interval $[0, \|F\|]$ that approximates the Tikhonov regularized filter function with regularization parameter ${\alpha}$ and accuracy $\ep$} of the equation
\[|F|^{1/2} g_z =  \phi_z\quad \text{ for } z \in \R^d\,, \]
 see Section \ref{DSM2} for details. {\color{black} The merits of developing these new indicators is to theoretically justify the numerical investigation in \cite{dsm-fm} for the continuous (and discretized) setting. We also wish to analyze the new indicators to connect the recent developments for DSMs to earlier qualitative reconstructive algorithms. In our experiments computing the singular values/vector as well as constructing the polynomial $P_{\alpha,\ep}$ adds a negligible increase in computing time for a modest sized far-field matrix.} 

Before we begin, we will show that the indicator $\left| (F \phi_z , \phi_z )_{L^2(\mathbb{S})} \right|$ decays as dist$(z,D)$ increases for an anisotropic scatterer (see \cite{Liu} for other scatterers). It has been shown in \cite{CCH-book} that $F$ has the factorization $F=H^* T H$ such that  
\[H: L^2(\mathbb{S}) \to \big[L^2(D)\big]^{d+1} \quad \text{ is given by } \quad Hg = \left( \grad {v}_g ,  {v}_g \right)^{\top}\]
where $v_g$ is the Herglotz wave function defined as 
\[v_g(x) =  \int_{\mathbb{S}}  \mathrm{e}^{\mathrm{i} k {x}  \cdot \hat{y} } g(\hat{y} ) \, \mathrm{d}s(\hat{y} ) \,.\]
Here $H^*$ is the adjoint operator to $H$ and $T$ is a bounded linear operator from $\big[L^2(D)\big]^{d+1}$ to itself.  We have the identity 
\[(v_{\phi_z})(x) =  \int_{\mathbb{S}}  \mathrm{e}^{- \mathrm{i} k {(z - x)}  \cdot \hat{y} } \, \mathrm{d}s(\hat{y} ) =\left\{\begin{array}{lr} 2\pi J_0(k | x - z|) \, \, & \, \text{in} \, \, \R^2 \,,\\
 				&  \\
4\pi j_0(k | x - z|) & \,  \text{in} \,\,\R^3\,,
 \end{array} \right. \]
where $J_0$ is the zeroth order Bessel function of the first kind and $j_0$ the zeroth order spherical Bessel function of the first kind (see for e.g. \cite{Liu}). The factorization of the far-field operator implies that 
\begin{eqnarray*}
\left| (F \phi_z , \phi_z )_{L^2(\mathbb{S})} \right| &=& \left| ( TH \phi_z , H \phi_z )_{[L^2(D)]^{d+1}}  \right|  \\
									  &\leq& C\| H \phi_z \|_{[L^2(D)]^{d+1}}^2 \\
 									  &=& C \| v_{\phi_z} \|^2_{H^1(D)} \, .
\end{eqnarray*}
Recall that $J_0 (t)$ and its derivatives decay at a rate of $t^{-1/2}$ as $t \to \infty$. In $\R^3$ we have that $j_0 (t)$ and its derivatives decay rate of  $t^{-1}$ as $t \to \infty$.
Therefore, the above inequality and the decay of the Bessel functions gives that   
\begin{theorem}\label{decay}
For all $z \in \R^d \setminus \ov{D}$ 
$$ \left| (F \phi_z , \phi_z )_{L^2(\mathbb{S})} \right|  =  \mathcal{O} \left(\mathrm{dist}(z,D)^{1-d}\right) {\color{black} \quad \text{ as } \quad\mathrm{dist}(z,D) \to \infty.}$$
\end{theorem}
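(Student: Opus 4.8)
Everything needed has already been assembled in the excerpt: the factorization $F = H^*TH$ with $T$ bounded yields the inequality
$$\left| (F \phi_z , \phi_z )_{L^2(\mathbb{S})} \right| \leq C \| H\phi_z\|_{[L^2(D)]^{d+1}}^2 = C\, \| v_{\phi_z} \|^2_{H^1(D)},$$
and $v_{\phi_z}$ is given explicitly in terms of the (spherical) Bessel function $J_0$ (resp. $j_0$). So the plan is simply to estimate $\| v_{\phi_z} \|^2_{H^1(D)}$ directly in terms of $\delta := \dist(z,D)$, using that $|x-z| \geq \delta$ for every $x \in D$, and then to plug the result into the displayed inequality.

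First I would record the pointwise data. In $\R^2$ we have $v_{\phi_z}(x) = 2\pi J_0(k|x-z|)$, and by the chain rule
$$\grad_x v_{\phi_z}(x) = 2\pi k\, J_0'\!\big(k|x-z|\big)\, \frac{x-z}{|x-z|}, \qquad \text{so}\qquad |\grad_x v_{\phi_z}(x)| = 2\pi k\, \big|J_0'(k|x-z|)\big|,$$
and analogously in $\R^3$ with $4\pi j_0$ and $j_0'$ in place of $2\pi J_0$ and $J_0'$. Hence the integrand of $\| v_{\phi_z} \|^2_{H^1(D)}$ is, up to the fixed constants $(2\pi)^2$ or $(4\pi)^2$, bounded by $|J_0(k|x-z|)|^2 + k^2|J_0'(k|x-z|)|^2$ in two dimensions and by $|j_0(k|x-z|)|^2 + k^2|j_0'(k|x-z|)|^2$ in three.

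Next I would invoke the decay stated in the excerpt: there exist $t_0>0$ and $C_0>0$ with $|J_0(t)|+|J_0'(t)| \leq C_0\, t^{-1/2}$ for $t\geq t_0$ (in $\R^2$) and $|j_0(t)|+|j_0'(t)| \leq C_0\, t^{-1}$ for $t\geq t_0$ (in $\R^3$). Since the claim concerns the regime $\delta\to\infty$, we may assume $k\delta \geq t_0$; then $k|x-z| \geq k\delta \geq t_0$ for all $x\in D$, and the pointwise bounds above give, uniformly in $x\in D$,
$$|v_{\phi_z}(x)|^2 + |\grad_x v_{\phi_z}(x)|^2 \leq C\, (k|x-z|)^{1-d} \leq C\, (k\delta)^{1-d},$$
where $1-d = -1$ for $d=2$ (from squaring the $t^{-1/2}$ rate) and $1-d=-2$ for $d=3$, and $C$ depends only on $k$ and $C_0$. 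Integrating over the bounded set $D$ yields $\| v_{\phi_z} \|^2_{H^1(D)} \leq C\,|D|\,\delta^{1-d}$, and combining with the factorization inequality gives $\left| (F \phi_z , \phi_z )_{L^2(\mathbb{S})} \right| \leq C'\,\delta^{1-d}$, which is the assertion.

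There is no genuine obstacle here; the only point deserving a line of care is the uniformity of the Bessel asymptotics over $x\in D$, which is exactly what the bound $|x-z|\geq\delta$ on $D$ supplies, together with the fact that the decay estimates for $J_0,J_0'$ (resp. $j_0,j_0'$) hold with one constant $C_0$ past a single threshold $t_0$ — a threshold eventually exceeded once $\dist(z,D)$ is large. One can also treat the two dimensions uniformly by observing that in $\R^d$ the function $v_{\phi_z}$ and its gradient decay like $|x-z|^{(1-d)/2}$, so the square of the $H^1$-integrand decays like $|x-z|^{1-d}$, which is the origin of the exponent in the statement.
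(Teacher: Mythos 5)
Your proposal is correct and follows exactly the route the paper takes: bound $\left| (F \phi_z , \phi_z )_{L^2(\mathbb{S})} \right|$ via the factorization $F=H^*TH$ by $C\| v_{\phi_z} \|^2_{H^1(D)}$, identify $v_{\phi_z}$ with the (spherical) Bessel function of $k|x-z|$, and invoke the $t^{-1/2}$ (resp.\ $t^{-1}$) decay of $J_0, J_0'$ (resp.\ $j_0, j_0'$). The paper leaves the final step as a one-line remark, whereas you supply the chain-rule computation of the gradient, the uniform threshold $t_0$ for the asymptotics, and the integration over the bounded set $D$ --- these details are correct and are precisely what the paper's sketch implicitly relies on.
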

We then see that the indicator will decay as $z$ moves away from the scatterer $D$. Note that Theorem \ref{decay} is valid for the case when $A$ and $n$ are complex-valued functions.

\section{A factorization based direct sampling method} \label{DSM1}

In this section, we will study the indicator using the operator $|F|^{1/2} = (F^*F)^{1/4}$. It is well known that one can uniquely recover the scatterer $D$ using the far-field pattern (see Chapter 6 of  \cite{cakoni-colton}). The purpose of analyzing the DSM is to derive stable and computationally simple reconstruction algorithms. The stability of the DSMs proposed in this section have been studied in \cite{Liu} and \cite{dsm-fm} (for the discretized case). We will show that the corresponding indicator functions 
\[  W_{\text{FDSM}}(z)=\left( |F|^{1/2} \phi_z , \phi_z \right)_{L^2(\mathbb{S})}  \quad \text{ and } \quad W_{\text{DSM}}(z) = \left| (F \phi_z , \phi_z )_{L^2(\mathbb{S})} \right| \]
with $\phi_z = \mathrm{e}^{ - \mathrm{i} k \hat{x}  \cdot z}$ are equivalent.  {\color{black}This crucial theoretical result is needed to prove the validity of the new indicator $W_{\text{{FDSM}}}(z)$  which is not established in \cite{dsm-fm}.} Here $W_{\text{{FDSM}}}(z)$ is the DSM based on the FM and $W_{\text{{DSM}}}(z)$ is the standard DSM studied in \cite{Liu}. To this end, we will bound $W_{\text{DSM}}(z) $ from above and below by our new indicator $W_{\text{{FDSM}}}(z)$. Since $W_{\text{{DSM}}}(z)$ decays as the dist$(z,D)$ increases {\color{black} this suggests that} we can plot either function to recover the scatterer. 

To begin, we need a few results for the far-field operator defined in \eqref{ffo} that is associated with \eqref{direct1}--\eqref{direct2} along with the radiation condition. We introduce the scattering operator associated with this direct scattering problem \eqref{direct1}--\eqref{direct2}. The  scattering operator ${ S}:L^2(\mathbb{S}) \to L^2(\mathbb{S})$  is defined by 
\[{S}=I+2 \mathrm{i} k |\gamma|^2 F\,.\]
Since $A(x)$ and $n(x)$ are assumed to be real-valued, we have that the scattering operator is unitary, ${S} {S}^*={S}^*{S}=I$ (see for e.g. \cite{cakoni-colton}). Here $I$ denotes the identity operator on $L^2(\mathbb{S}) $. This implies that the corresponding far-field operator $F:L^2(\mathbb{S}) \to L^2(\mathbb{S})$ is normal and compact. We can now conclude that $F$ has an orthonormal eigenvalue decomposition $(\lambda_j , \psi_j) \in \C \times L^2(\mathbb{S})$ such that 
\[ Fg = \sum\limits_{j=1}^{\infty} \lambda_j  (g,\psi_j)_{_{L^2(\mathbb{S})}} \psi_j \quad \text{for all } \quad  g \in L^2(\mathbb{S})\,.\]
Since $F$ is a compact operator we have that $|\lambda_j| \to 0$ as $j \to \infty$. Provided that the wave number $k$ is not an interior transmission eigenvalue we have that $F$ is injective with a dense range (see \cite{cakoni-colton,CCH-book}). This implies that  $|\lambda_j| \neq 0$ for all $j$ and that the set $\{ \psi_j\}$ is a complete orthonormal set in $L^2(\mathbb{S})$. Following the results of \cite[Chapter 1]{kirschbook} it can be shown that the far-field operator has the factorization 
\begin{eqnarray}
F = |F|^{1/2} Q |F|^{1/2} \label{factor}
\end{eqnarray}
where we use the eigensystem to define the operators
\[ |F|^{1/2} g =  \sum\limits_{j=1}^{\infty}\sqrt{ |\lambda_j |} (g,\psi_j)_{_{L^2(\mathbb{S})}} \psi_j  \quad \text{ and } \quad Q g =  \sum\limits_{j=1}^{\infty} \frac{\lambda_j}{ |\lambda_j |} (g,\psi_j)_{_{L^2(\mathbb{S})}} \psi_j\,.\]
It is clear that $|F|^{1/2}$ and $Q$  are {\color{black} bounded linear operators that map $L^2(\mathbb{S})$ into itself}. Since $F$ is injective we have that $|F|^{1/2}$ is a positive self-adjoint compact operator. 
Since the scattering operator is unitary we can conclude that for all $g \in L^2(\mathbb{S})$
\begin{eqnarray}
 \big| (Qg,g)_{{L^2(\mathbb{S})}} \big| \geq \mu \| g \|^2_{{L^2(\mathbb{S})}} \quad \text{ and } \quad  \|Qg\|_{{L^2(\mathbb{S})}} \leq \| g\|_{{L^2(\mathbb{S})}} \label{Q}
\end{eqnarray}
for some positive constant $\mu$ (see for e.g. Theorem 7.29 in \cite{cakoni-colton} for details). 

In \cite{dsm-fm} the authors did not succeed in bounding $W_{\text{{FDSM}}}(z)$ above by $W_{\text{{DSM}}}(z)$ for the discretized far-field operator which is what is needed for showing the equivalence of these two indicators. From the factorization of the far-field operator \eqref{factor} and the above inequality \eqref{Q}  we have the following estimates  
\begin{eqnarray*}
\left| (F \phi_z , \phi_z )_{L^2(\mathbb{S})} \right| &=&  \left| \left(Q |F|^{1/2} \phi_z , |F|^{1/2} \phi_z \right)_{L^2(\mathbb{S})}\right| \\
									  & \geq & \mu \big\| |F|^{1/2} \phi_z \big\|^2_{L^2(\mathbb{S})}\\
									  &= &\mu \sup\limits_{\| \varphi \|_{L^2(\mathbb{S})} = 1}  \left|  \left( |F|^{1/2} \phi_z , \varphi \right)_{L^2(\mathbb{S})}\right|^2\\
									  & \geq& \frac{\mu}{2^{d-1} \pi}   \left|  \left( |F|^{1/2} \phi_z , \phi_z \right)_{L^2(\mathbb{S})}\right|^2  
\end{eqnarray*}
since
\[\| \phi_z \|^2_{L^2(\mathbb{S})} = 2^{d-1} \pi\]
holds.
The above estimate gives that $W^2_{\text{{FDSM}}}(z)$ will have the same decay as dist$(z,D)$ increases which is the critical piece missing in the manuscript \cite{dsm-fm}.  We have just about all we need to prove that the two indicators are equivalent. In order to prove the upper bound we need one last theoretical lemma for positive operators on a Hilbert space to prove the equivalence. 
 \begin{lemma}\label{positive}
 Let $T : \mathcal{V} \to  \mathcal{V}$ be a bounded positive operator on a Hilbert space $\mathcal{V}$ then we have that 
 \[ \|T v \|^2_{ \mathcal{V}} \leq \|T\| (Tv,v)_{\mathcal{V}} \quad \text{ for all } \, \, v\in \mathcal{V}\,.\]
 \end{lemma}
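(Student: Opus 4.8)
The plan is to exploit the fact that a bounded positive operator $T$ induces a (possibly degenerate) semi-inner product on $\mathcal{V}$, namely $\langle x , y\rangle_T := (Tx , y)_{\mathcal{V}}$, which is sesquilinear, Hermitian symmetric (since $T=T^*$), and nonnegative (since $T\geq 0$). The key tool is then the Cauchy--Schwarz inequality for this semi-inner product: for all $x,y \in \mathcal{V}$,
\[
\big| (Tx , y)_{\mathcal{V}} \big|^2 \leq (Tx , x)_{\mathcal{V}} \, (Ty , y)_{\mathcal{V}}\,.
\]
Note that Cauchy--Schwarz holds even when the form is degenerate, so no invertibility of $T$ is needed.

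Next I would apply this inequality with the specific choice $x = v$ and $y = Tv$. This yields
\[
\| Tv \|_{\mathcal{V}}^4 = \big| (Tv , Tv)_{\mathcal{V}} \big|^2 \leq (Tv , v)_{\mathcal{V}} \, \big( T(Tv) , Tv \big)_{\mathcal{V}} = (Tv , v)_{\mathcal{V}} \, (T^2 v , Tv)_{\mathcal{V}}\,.
\]
Then I would estimate the last factor crudely by Cauchy--Schwarz in $\mathcal{V}$ together with the operator bound $\|T^2 v\|_{\mathcal{V}} \leq \|T\| \, \|Tv\|_{\mathcal{V}}$, giving
\[
(T^2 v , Tv)_{\mathcal{V}} \leq \|T^2 v\|_{\mathcal{V}} \, \|Tv\|_{\mathcal{V}} \leq \|T\| \, \|Tv\|_{\mathcal{V}}^2\,.
\]
Combining the two displays, $\| Tv \|_{\mathcal{V}}^4 \leq \|T\| \, (Tv , v)_{\mathcal{V}} \, \|Tv\|_{\mathcal{V}}^2$.

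Finally, if $Tv \neq 0$ I would divide both sides by $\|Tv\|_{\mathcal{V}}^2$ to conclude $\|Tv\|_{\mathcal{V}}^2 \leq \|T\|\, (Tv , v)_{\mathcal{V}}$; if $Tv = 0$ the inequality is trivial since the right-hand side is nonnegative. I do not anticipate any genuine obstacle here — the only point requiring a little care is justifying Cauchy--Schwarz for the degenerate semi-inner product $\langle\cdot,\cdot\rangle_T$ (so that the argument does not secretly use injectivity of $T$), but this is the standard fact that the Cauchy--Schwarz inequality holds for any nonnegative Hermitian sesquilinear form.
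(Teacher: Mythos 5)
Your proof is correct. The paper itself does not actually prove Lemma \ref{positive}: its ``proof'' is a one-line citation to Lemma 2.1 of the reference on closed range positive operators, so your argument supplies a self-contained proof where the paper offers only a pointer. The route you take --- the generalized Cauchy--Schwarz inequality for the nonnegative Hermitian form $\langle x,y\rangle_T=(Tx,y)_{\mathcal{V}}$, applied to the pair $x=v$, $y=Tv$, followed by the crude bound $(T^2v,Tv)_{\mathcal{V}}\le\|T\|\,\|Tv\|^2_{\mathcal{V}}$ and division by $\|Tv\|^2_{\mathcal{V}}$ --- is the standard argument for this estimate and is essentially what the cited lemma's proof amounts to, so nothing is lost by inlining it. Your attention to the two delicate points is also right: Cauchy--Schwarz holds for a degenerate nonnegative sesquilinear form (so no injectivity of $T$ is used), and the case $Tv=0$ is harmless because positivity makes the right-hand side nonnegative. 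The only remark worth adding is that you invoke $T=T^*$ for the Hermitian symmetry of the form; on a complex Hilbert space this follows automatically from $(Tx,x)_{\mathcal{V}}\ge 0$ for all $x$, while on a real Hilbert space self-adjointness should be taken as part of the definition of ``positive operator.'' Either reading is consistent with how the paper uses the lemma, since there $T=|F|^{1/2}$ is self-adjoint by construction.
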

 \begin{proof}
For the proof we refer the reader to \cite[Lemma 2.1]{positive-lemma}.
 \end{proof}

Notice that since $F$ is injective we have that $|F|^{1/2}$ is a positive operator acting on the Hilbert space $L^2(\mathbb{S})$. Using the above lemma we have that 
\begin{eqnarray*}
\left| (F \phi_z , \phi_z )_{L^2(\mathbb{S})} \right| &=&  \left| \left(Q |F|^{1/2} \phi_z , |F|^{1/2} \phi_z \right)_{L^2(\mathbb{S})}\right| \\
									  & \leq &  \big\| |F|^{1/2} \phi_z \big\|^2_{L^2(\mathbb{S})}\\
									  &\leq &  \left\| |F|^{1/2} \right\|  \left( |F|^{1/2} \phi_z , \phi_z \right)_{L^2(\mathbb{S})}\,.  
\end{eqnarray*}
Thus, we have proven that $W_{\text{{DSM}}}(z)$ is bounded above by $W_{\text{{FDSM}}}(z)$ which implies that they are equivalent in the sense that 
\[ \frac{\mu}{2^{d-1} \pi}   W^2_{\text{{FDSM}}}(z) \leq W_{\text{{DSM}}}(z) \leq  \left\| |F|^{1/2} \right\| W_{\text{{FDSM}}}(z)\,.\]
Therefore, we have that the following result. 
\begin{theorem}
There {\color{black} exist} two positive constants $c_1$ and $c_2$ such that 
\[c_1 \left|  \left( |F|^{1/2} \phi_z , \phi_z \right)_{L^2(\mathbb{S})}\right|^2 \leq  \left| (F \phi_z , \phi_z )_{L^2(\mathbb{S})} \right|  \leq c_2 \left( |F|^{1/2} \phi_z , \phi_z \right)_{L^2(\mathbb{S})}\,.\]
\end{theorem}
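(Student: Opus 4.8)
The plan is to derive the equivalence from the factorization $F = |F|^{1/2} Q |F|^{1/2}$ established in \eqref{factor}, together with the two-sided bound \eqref{Q} on $Q$ and the abstract Lemma \ref{positive} on positive operators. The key observation is that $F\phi_z = |F|^{1/2}(Q|F|^{1/2}\phi_z)$, so pairing with $\phi_z$ and using self-adjointness of $|F|^{1/2}$ gives $(F\phi_z,\phi_z)_{L^2(\mathbb{S})} = (Q|F|^{1/2}\phi_z, |F|^{1/2}\phi_z)_{L^2(\mathbb{S})}$. This rewrites the standard DSM indicator purely in terms of $|F|^{1/2}\phi_z$ and the well-conditioned operator $Q$.

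For the lower bound on $W_{\text{DSM}}(z)$, I would apply the coercivity estimate in the first inequality of \eqref{Q} with $g = |F|^{1/2}\phi_z$ to get $|(F\phi_z,\phi_z)| \geq \mu \| |F|^{1/2}\phi_z\|^2$. Then I would bound $\| |F|^{1/2}\phi_z\|^2$ from below by the squared indicator $|(|F|^{1/2}\phi_z,\phi_z)|^2$: writing the norm as a supremum over unit-norm test functions $\varphi$ and specializing to $\varphi = \phi_z / \|\phi_z\|$, and using the explicit value $\|\phi_z\|^2_{L^2(\mathbb{S})} = 2^{d-1}\pi$, yields $\| |F|^{1/2}\phi_z\|^2 \geq (2^{d-1}\pi)^{-1} |(|F|^{1/2}\phi_z,\phi_z)|^2$. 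Combining gives the left-hand inequality with $c_1 = \mu/(2^{d-1}\pi)$.

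For the upper bound, I would use the second inequality of \eqref{Q}, $\|Qg\| \leq \|g\|$, together with Cauchy--Schwarz to get $|(F\phi_z,\phi_z)| = |(Q|F|^{1/2}\phi_z, |F|^{1/2}\phi_z)| \leq \| |F|^{1/2}\phi_z\|^2$. The final step is to pass from the squared norm $\| |F|^{1/2}\phi_z\|^2$ to the (unsquared) indicator $(|F|^{1/2}\phi_z,\phi_z)$; this is exactly where Lemma \ref{positive} enters, applied to the positive operator $T = |F|^{1/2}$ on $\mathcal{V} = L^2(\mathbb{S})$ with $v = \phi_z$, giving $\| |F|^{1/2}\phi_z\|^2 \leq \| |F|^{1/2}\| \, (|F|^{1/2}\phi_z,\phi_z)$. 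Hence the right-hand inequality holds with $c_2 = \| |F|^{1/2}\|$. Note that injectivity of $F$ (guaranteed away from transmission eigenvalues) is what makes $|F|^{1/2}$ genuinely positive, so that $(|F|^{1/2}\phi_z,\phi_z) \geq 0$ and the absolute value can be dropped on the right.

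The main obstacle is the asymmetry between the two sides: the standard indicator $W_{\text{DSM}}$ is quadratic in $F$-like quantities while $W_{\text{FDSM}}$ is linear in $|F|^{1/2}$, so the lower bound naturally produces $W_{\text{FDSM}}^2$ and the upper bound produces $W_{\text{FDSM}}$ — one cannot expect a cleaner two-sided statement. The genuinely nontrivial ingredient, and the one the authors note was missing in \cite{dsm-fm}, is the upper bound $\| |F|^{1/2}\phi_z\|^2 \leq \| |F|^{1/2}\|(|F|^{1/2}\phi_z,\phi_z)$; everything else is a direct manipulation of the factorization and the coercivity of $Q$. I would therefore present the lower bound first as a warm-up, then isolate Lemma \ref{positive} as the key step for the upper bound, and finally record the chain of inequalities displayed just above the theorem statement, reading off $c_1$ and $c_2$ explicitly.
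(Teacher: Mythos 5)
Your proposal is correct and follows essentially the same route as the paper: the factorization $F = |F|^{1/2} Q |F|^{1/2}$ combined with the coercivity and boundedness of $Q$ from \eqref{Q}, the specialization of the supremum to $\varphi = \phi_z/\|\phi_z\|_{L^2(\mathbb{S})}$ with $\|\phi_z\|^2_{L^2(\mathbb{S})} = 2^{d-1}\pi$ for the lower bound, and Lemma \ref{positive} applied to $T = |F|^{1/2}$ for the upper bound. The constants $c_1 = \mu/(2^{d-1}\pi)$ and $c_2 = \| |F|^{1/2}\|$ you identify are exactly those obtained in the paper.
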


Since we have shown in Theorem \ref{decay} that $W_{\text{DSM}}(z)$ is bounded by the squared $H^1(D)$ norm of the zeroth order Bessel function in $\R^d$  for $d=2$ or $3$ we have that $W_{\text{{DSM}}}(z) = \mathcal{O}\left(\mathrm{dist}(z,D)^{1-d}\right)\,$ {\color{black} as $\,\mathrm{dist}(z,D) \to \infty$} when the sampling point $z \in \R^d \setminus \ov{D}$ which gives the following result. 
\begin{theorem}\label{decay-fdsm}
For all $z \in \R^d \setminus \ov{D}$ 
\[\left|  \left( |F|^{1/2} \phi_z , \phi_z \right)_{L^2(\mathbb{S})}\right|^2 =  \mathcal{O} \left(\mathrm{dist}(z,D)^{1-d}\right) {\color{black} \quad \text{ as } \quad\mathrm{dist}(z,D) \to \infty.} \]
\end{theorem}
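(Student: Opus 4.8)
The plan is to obtain this purely as a corollary of the two preceding results, without any new estimate on $|F|^{1/2}$ directly. The previous theorem already supplies the two-sided bound
\[
c_1 \left| \left( |F|^{1/2} \phi_z , \phi_z \right)_{L^2(\mathbb{S})}\right|^2 \leq \left| (F \phi_z , \phi_z )_{L^2(\mathbb{S})} \right| \leq c_2 \left( |F|^{1/2} \phi_z , \phi_z \right)_{L^2(\mathbb{S})}
\]
with $c_1, c_2 > 0$, i.e.\ $c_1 W^2_{\text{FDSM}}(z) \le W_{\text{DSM}}(z)$. The only ingredient needed on top of this is the lower (left-hand) inequality, which isolates $W^2_{\text{FDSM}}(z) \le c_1^{-1} W_{\text{DSM}}(z)$ for every sampling point $z$.

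First I would invoke Theorem~\ref{decay}, which gives $W_{\text{DSM}}(z) = \left| (F \phi_z , \phi_z )_{L^2(\mathbb{S})} \right| = \mathcal{O}\!\left(\mathrm{dist}(z,D)^{1-d}\right)$ as $\mathrm{dist}(z,D) \to \infty$ for $z \in \R^d \setminus \ov{D}$; recall this itself comes from the factorization $F = H^* T H$, the identity $v_{\phi_z}(x) = 2\pi J_0(k|x-z|)$ (resp.\ $4\pi j_0(k|x-z|)$ in $\R^3$), and the decay rates $t^{-1/2}$ (resp.\ $t^{-1}$) of the zeroth-order Bessel function and its derivatives. Then I would simply chain the two facts: $W^2_{\text{FDSM}}(z) \le c_1^{-1} W_{\text{DSM}}(z) = \mathcal{O}\!\left(\mathrm{dist}(z,D)^{1-d}\right)$, which is exactly the claimed statement. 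Since the constant $c_1 = \mu / (2^{d-1}\pi)$ is explicit (from the unitarity of the scattering operator and $\|\phi_z\|_{L^2(\mathbb{S})}^2 = 2^{d-1}\pi$), no further care is needed.

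There is essentially no obstacle here: the theorem is a one-line consequence of the equivalence theorem and Theorem~\ref{decay}, and the ``hard part''—bounding $W_{\text{DSM}}(z)$ above by $W_{\text{FDSM}}(z)$ via Lemma~\ref{positive}, and $W_{\text{FDSM}}^2(z)$ above by a multiple of $W_{\text{DSM}}(z)$ via the coercivity~\eqref{Q} of $Q$—has already been carried out. One could alternatively try to prove the decay directly by estimating $\big\||F|^{1/2}\phi_z\big\|_{L^2(\mathbb{S})}^2$ through the factorization~\eqref{factor}, but this is less clean because $|F|^{1/2}$ does not factor through the Herglotz operator $H$ the way $F$ does, so the Bessel-function decay is not as transparent; routing through $W_{\text{DSM}}(z)$ is the natural argument and is the one I would present.
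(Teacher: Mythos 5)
Your proposal is correct and follows exactly the paper's route: the paper also obtains the result by combining the lower bound $\frac{\mu}{2^{d-1}\pi}\,W^2_{\text{FDSM}}(z) \le W_{\text{DSM}}(z)$ (derived from the factorization \eqref{factor}, the coercivity \eqref{Q} of $Q$, and $\|\phi_z\|^2_{L^2(\mathbb{S})}=2^{d-1}\pi$) with the Bessel-function decay of $W_{\text{DSM}}(z)$ from Theorem~\ref{decay}. Your identification of $c_1=\mu/(2^{d-1}\pi)$ and your remark that only the left-hand inequality of the equivalence is needed are both consistent with the paper.
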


Notice that since $|F|^{1/2}$ is a positive operator the function $W_{\text{{FDSM}}}(z)$ is strictly positive for all $z$ and as the sampling point moves away from the boundary $W_{\text{{FDSM}}}(z)$ decays. Using the eigensystem for $F$ we have that 
\[  W_{\text{{FDSM}}}(z) =   \sum\limits_{j=1}^{\infty}\sqrt{ |\lambda_j |} \left| (\phi_z ,\psi_j)_{_{L^2(\mathbb{S})}}\right|^2 \]
compared to the indicator given by the FM defined as 
\[ z \longmapsto  \sum\limits_{j=1}^{\infty} \frac{1}{ |\lambda_j |} \left| (\phi_z ,\psi_j)_{_{L^2(\mathbb{S})}}\right|^2\,. \]
Since the eigenvalues tend to zero rapidly one should avoid dividing by them in practice. The function $W_{\text{{FDSM}}}(z)$ is computed by multiplying by the square roots of the eigenvalues which will be more stable in the presence of errors in the measured far-field data. {\color{black} One drawback is that Theorems \ref{decay} and \ref{decay-fdsm} only give that the new indicators decay as the sampling point moves away from the scatterer which may result in low contrast reconstructions. In \cite{Liu} it is seen that this can be overcome by raising the new indicators to the power $p>1$ to sharpen the resolution. To recover the scatterer one can take a level curve of the indicator $W^p$. }        

Notice that the analysis in this section only requires that the far-field operator is injective with dense range and is normal which implies the orthonormal eigenvalue decomposition. The fact that the corresponding scattering operator is unitary is the key component to the analysis in this section. Therefore, this equivalence of the indicators holds for any scattering problem where the corresponding scattering operator is unitary. This is true for the inverse obstacle scattering problem where the scattered field solves 
\[ \Delta u^s+k^2u^s = 0  \, \,  \text{ in } \, \, \R^d \setminus \overline{D} \quad \text{ and } \quad \mathcal{B}(u^s)=- \mathcal{B}(u^i) \,\, \text{ on } \,\,  \partial D\]
along with the Sommerfeld radiation condition. Here the boundary operator is given by 
\[\mathcal{B}(\varphi)=\varphi \quad \text{ or } \quad  \mathcal{B}(\varphi)=\partial_{\nu} \varphi + \gamma(x) \varphi\]
where $\gamma \in L^{\infty}(\partial D)$ is non-negative. This implies that $W_{\text{{FDSM}}}(z)$ can be used to recover sound soft/sound hard, isotropic, and impedance type scatterers. 

\section{A Tikhonov regularization based direct sampling method} \label{DSM2}
The operator $|F|^{1/2}$ has been used to recover the scattering objects in previous studies where one solves the ill-posed equation
\begin{eqnarray}
|F|^{1/2} g_z =\phi_z \quad \text{ for } z \in \R^d \label{fm-equ}
\end{eqnarray}
which is solvable if and only if the sampling point $z \in D$. One of the main ideas proposed in \cite{dsm-fm} is to derive an approximate solution operator to the above equation and use the approximate solution operator to define a DSM. In \cite{dsm-fm} the authors approximate the solution operator using a Neumann series. Using a Neumann series to approximate the solution operator amounts to constructing a polynomial that when applied to the operator acts as the solution operator of \eqref{fm-equ}. The main idea we exploit is to construct a polynomial that when evaluated at the operator $|F|$ acts as an approximate solution operator for \eqref{fm-equ}. Here we propose approximating the solution operator using Tikhonov regularization, which is commonly used in the literature to solve \eqref{fm-equ} (see for e.g. \cite{fm-gbc}). 

The analysis in this section again appeals to the eigenvalue decomposition. Now, recall that the orthonormal eigenvalue decomposition of the injective far-field operator $F$ is given by $(\lambda_j , \psi_j) \in \C \setminus \{0\} \times L^2(\mathbb{S})$. Therefore, we can define $|F|^p$ for $p>0$ by
\[ |F|^{p} g =  \sum\limits_{j=1}^{\infty}{ |\lambda_j |}^p (g,\psi_j)_{_{L^2(\mathbb{S})}} \psi_j   \]
where the set $\{ \psi_j\}$ is an orthonormal basis in $L^2(\mathbb{S})$.
Note that the Tikhonov regularized solution of \eqref{fm-equ} will be denoted $g_z^{\alpha}$ and is the unique minimizer of the functional 
\[ \big\| |F|^{1/2}   g_z^{\alpha} - \phi_z  \big\|^2_{L^2(\mathbb{S})} +\,  \alpha \, \big\| g_z^{\alpha} \big\|^2_{L^2(\mathbb{S})}\]
where $\alpha > 0$ is the regularization parameter. Simple calculations give that the minimizer $g_z^{\alpha}$ satisfies the equation 
\[ \alpha g_z^{\alpha}  +  |F|  g_z^{\alpha}  = |F|^{1/2} \phi_z \quad \text{ for } z \in \R^d\]
and using the eigenvalue decomposition we have that 
\[ g_z^{\alpha}  =  \sum\limits_{j=1}^{\infty} \frac{ \sqrt{ |\lambda_j |}}{\alpha +  |\lambda_j |} (\phi_z,\psi_j)_{_{L^2(\mathbb{S})}} \psi_j\,.\]
We now define the function $\Gamma_{\alpha} (t)= \frac{ \sqrt{t}}{\alpha + t}$ which is continuous on the interval $[0, \|F\|]$ and we have the solution operator for the Tikhonov regularization of \eqref{fm-equ} given by the mapping 
\begin{eqnarray}
\phi_z \longmapsto  \sum\limits_{j=1}^{\infty} \Gamma_{\alpha} ( |\lambda_j |) (\phi_z,\psi_j)_{_{L^2(\mathbb{S})}} \psi_j\,. \label{tik-solu}
\end{eqnarray}
In general, the regularization parameter  $\alpha$ is taken to be small. The parameter $\alpha$ is chosen to be a fixed but small parameter throughout all the calculations in our experiments provided in Section \ref{numerics}.

In order to approximate the solution operator in \eqref{tik-solu} we exploit the fact that for all $\alpha >0$ the function $\Gamma_{\alpha}(t)$ is actually continuous for all $t\geq0$ and therefore we have that for every $\ep >0$ there is a polynomial ${\color{red} P_{\alpha,\ep}(t)}$ such that 
\begin{eqnarray}
\| {\color{black} P_{\alpha,\ep}}(t) -  \Gamma_{\alpha} (t) \|_{L^{\infty}(0, \|F\|) } < \ep. \label{poly}
\end{eqnarray}
{\color{black}The approximation of the solution operator is now defined by} 
\[{\color{black} P_{\alpha,\ep}}(|F|) \phi_z =\sum\limits_{j=1}^{\infty} {\color{black} P_{\alpha,\ep}}( |\lambda_j |) (\phi_z,\psi_j)_{_{L^2(\mathbb{S})}} \psi_j .\]
We have defined the polynomial of the operator $|F|$ via the eigenvalue decomposition as is commonly done in Linear Algebra. The Tikhonov indicator we propose in this section for a fixed $\alpha$ positive is defined as the function 
\[W_{\text{TDSM}}(z) = \big\| {\color{black} P_{\alpha,\ep}}(|F|)\phi_z \big\|^2_{L^2(\mathbb{S})} \quad \text{with} \quad \left\| {\color{black} P_{\alpha,\ep}}( t) - \frac{ \sqrt{t}}{\alpha + t} \right\|_{L^{\infty}(0, \|F\|) } \approx 0\]
where ${\color{black} P_{\alpha,\ep}}(t)$ is a polynomial. By definition of ${\color{black} P_{\alpha,\ep}}(|F|)\phi_z$ we have that 
\[\big\| {\color{black} P_{\alpha,\ep}}(|F|)\phi_z \big\|^2_{L^2(\mathbb{S})}=\sum\limits_{j=1}^{\infty} {\color{black} P^2_{\alpha,\ep}}( |\lambda_j |) \left| (\phi_z ,\psi_j)_{_{L^2(\mathbb{S})}}\right|^2\]
by appealing to the fact that $\{ \psi_j\}$ is an orthonormal set in $L^2(\mathbb{S})\,.$    

The goal now is to see how the new Tikhonov indicator $W_{\text{TDSM}}(z)$ compares to the indicators studied in the previous sections. 
To do so, we assume that the {\color{black} regularization parameter is known and fixed} and note that for the polynomial \textcolor{black}{$P_{\alpha,\ep}(t)$} satisfying \eqref{poly} we have that 
\[ {\color{black} P^2_{\alpha,\ep}}( |\lambda_j |) \leq   \Gamma^2_{\alpha} ( |\lambda_j |) +2 \ep  \Gamma_{\alpha} ( |\lambda_j |) + \ep^2 \quad \text{ for all } \quad \ep>0\quad \text{and}\quad j\in \mathbb{N} \,.\]
Since $\Gamma_{\alpha}(t)$ is continuous for $t\geq 0$, we let 
\[C_{\alpha} =  \left\|\Gamma_{\alpha}(t)  \right\|_{L^{\infty}(0, \|F\|) } < \infty\]
which only depends on $\alpha$ and $\|F\|$. Using basic calculus one can easily show that 
\[C_{\alpha} = \max \left\{ \frac{1}{2 \sqrt{\alpha} } , \frac{ \sqrt{\|F\|} }{\alpha +\|F\|} \right\}\,.\]
Now assume that $0<\ep <1$ and estimate 
\begin{eqnarray*}
\big\| {\color{black} P_{\alpha,\ep}}(|F|)\phi_z \big\|^2_{L^2(\mathbb{S})}  &\leq& \sum\limits_{j=1}^{\infty}  \Gamma^2_{\alpha}( |\lambda_j |) \left| (\phi_z ,\psi_j)_{_{L^2(\mathbb{S})}}\right|^2  \\
                                    & + &2 \ep \sum\limits_{j=1}^{\infty}  \Gamma_{\alpha} ( |\lambda_j |) \left| (\phi_z ,\psi_j)_{_{L^2(\mathbb{S})}}\right|^2 + \ep^2 \sum\limits_{j=1}^{\infty} \left| (\phi_z ,\psi_j)_{_{L^2(\mathbb{S})}}\right|^2 \\
								   &\leq& \sum\limits_{j=1}^{\infty}  \Gamma^2_{\alpha}( |\lambda_j |) \left| (\phi_z ,\psi_j)_{_{L^2(\mathbb{S})}}\right|^2 + \left(2 \ep C_\alpha + \ep^2\right)  \| \phi_z \|^2_{L^2(\mathbb{S})} \\
								   &=& \sum\limits_{j=1}^{\infty}  \Gamma^2_{\alpha}( |\lambda_j |) \left| (\phi_z ,\psi_j)_{_{L^2(\mathbb{S})}}\right|^2 +  2^{d-1} \pi \left(2 \ep C_\alpha + \ep^2\right)\,. 
\end{eqnarray*}
Notice that by definition $\Gamma^2_{\alpha}( |\lambda_j |) \leq |\lambda_j | / \alpha^2$ which gives that 
\[\big\| {\color{black} P_{\alpha,\ep}}(|F|)\phi_z \big\|^2_{L^2(\mathbb{S})}  \leq  \frac{1}{\alpha^2} \sum\limits_{j=1}^{\infty}  |\lambda_j | \left| (\phi_z ,\psi_j)_{_{L^2(\mathbb{S})}}\right|^2 + C(\alpha , d) \ep\]
where $C(\alpha , d)$ is a positive constant depending on the regularization parameter and the dimension. By definition of $|F|^{1/2}$ it is clear that 
\[\big\| {\color{black} P_{\alpha,\ep}}(|F|)\phi_z \big\|^2_{L^2(\mathbb{S})}  \leq  \frac{1}{\alpha^2}  \big\| |F|^{1/2} \phi_z \big\|^2_{L^2(\mathbb{S})}  + C(\alpha , d) \ep\,.\]
Therefore, by appealing to estimates in Section \ref{DSM1} we have the following result. 
\begin{theorem} \label{TDMS}
For all fixed $\alpha>0$  we have 
\[\big\| {\color{black} P_{\alpha,\ep}}(|F|)\phi_z \big\|^2_{L^2(\mathbb{S})}  \leq  \frac{1}{\mu \alpha^2} \left| (F \phi_z , \phi_z )_{L^2(\mathbb{S})} \right| + \mathcal{O}( \ep ) \quad {\color{black} \text{ as } \quad \ep \to 0}\]
where the constant $\mu$ is defined in \eqref{Q} and the polynomial ${\color{black} P_{\alpha,\ep}}(t)$ satisfies \eqref{poly}. 
\end{theorem}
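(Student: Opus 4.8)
The plan is to simply assemble the chain of estimates that has already been set up in the discussion preceding the theorem, with the one additional ingredient being the lower bound on $\big|(F\phi_z,\phi_z)_{L^2(\mathbb{S})}\big|$ established in Section \ref{DSM1}. First I would start from the spectral representation
\[\big\| P_{\alpha,\ep}(|F|)\phi_z \big\|^2_{L^2(\mathbb{S})}=\sum_{j=1}^{\infty} P^2_{\alpha,\ep}(|\lambda_j|)\,\big|(\phi_z,\psi_j)_{L^2(\mathbb{S})}\big|^2,\]
and bound each coefficient using \eqref{poly}: since $|P_{\alpha,\ep}(t)-\Gamma_\alpha(t)|<\ep$ on $[0,\|F\|]$ and $\Gamma_\alpha\geq 0$, one has $P^2_{\alpha,\ep}(t)\leq \Gamma^2_\alpha(t)+2\ep\,\Gamma_\alpha(t)+\ep^2$. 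Summing against the nonnegative weights $|(\phi_z,\psi_j)|^2$, pulling the $\ep$-terms out with $\Gamma_\alpha(|\lambda_j|)\leq C_\alpha$, and using the normalization $\|\phi_z\|^2_{L^2(\mathbb{S})}=2^{d-1}\pi$ together with Parseval, this produces
\[\big\| P_{\alpha,\ep}(|F|)\phi_z \big\|^2_{L^2(\mathbb{S})}\leq \sum_{j=1}^{\infty}\Gamma^2_\alpha(|\lambda_j|)\,\big|(\phi_z,\psi_j)\big|^2 + 2^{d-1}\pi\big(2\ep C_\alpha+\ep^2\big).\]

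Next I would discard the denominator in $\Gamma_\alpha$: since $\Gamma^2_\alpha(t)=t/(\alpha+t)^2\leq t/\alpha^2$, the main sum is controlled by $\alpha^{-2}\sum_j |\lambda_j|\,|(\phi_z,\psi_j)|^2$, which by the definition of $|F|^{1/2}$ via the eigensystem equals $\alpha^{-2}\big\||F|^{1/2}\phi_z\big\|^2_{L^2(\mathbb{S})}$. Collecting the $\ep$ and $\ep^2$ contributions into a single constant $C(\alpha,d)$ (finite for each fixed $\alpha$, and the $\ep^2$ term is absorbed once $0<\ep<1$) yields
\[\big\| P_{\alpha,\ep}(|F|)\phi_z \big\|^2_{L^2(\mathbb{S})}\leq \frac{1}{\alpha^2}\big\||F|^{1/2}\phi_z\big\|^2_{L^2(\mathbb{S})}+C(\alpha,d)\,\ep.\]

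Finally I would invoke the lower estimate derived in Section \ref{DSM1} from the factorization \eqref{factor} and the coercivity \eqref{Q} of $Q$, namely
\[\big|(F\phi_z,\phi_z)_{L^2(\mathbb{S})}\big|=\big|(Q|F|^{1/2}\phi_z,|F|^{1/2}\phi_z)_{L^2(\mathbb{S})}\big|\geq \mu\,\big\||F|^{1/2}\phi_z\big\|^2_{L^2(\mathbb{S})},\]
so that $\big\||F|^{1/2}\phi_z\big\|^2_{L^2(\mathbb{S})}\leq \mu^{-1}\big|(F\phi_z,\phi_z)_{L^2(\mathbb{S})}\big|$. Substituting this into the previous inequality gives exactly
\[\big\| P_{\alpha,\ep}(|F|)\phi_z \big\|^2_{L^2(\mathbb{S})}\leq \frac{1}{\mu\alpha^2}\big|(F\phi_z,\phi_z)_{L^2(\mathbb{S})}\big|+\mathcal{O}(\ep)\quad\text{as }\ep\to 0,\]
which is the claimed bound. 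There is no serious obstacle here: the argument is essentially bookkeeping of the polynomial-approximation error terms, and the only genuinely structural input is the unitarity of the scattering operator (through the constant $\mu$ in \eqref{Q}), which has already been secured. The one point requiring a little care is checking that the constant multiplying $\ep$ stays finite and depends only on $\alpha$ and $d$ — this is immediate once one uses $C_\alpha=\max\{(2\sqrt\alpha)^{-1},\sqrt{\|F\|}/(\alpha+\|F\|)\}<\infty$ and restricts to $0<\ep<1$.
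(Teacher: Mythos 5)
Your proposal is correct and follows essentially the same route as the paper: expand in the eigenbasis, bound $P^2_{\alpha,\ep}$ by $\Gamma^2_\alpha+2\ep\Gamma_\alpha+\ep^2$ using \eqref{poly}, control the error terms via $C_\alpha$ and $\|\phi_z\|^2_{L^2(\mathbb{S})}=2^{d-1}\pi$, use $\Gamma^2_\alpha(t)\leq t/\alpha^2$ to reach $\alpha^{-2}\big\||F|^{1/2}\phi_z\big\|^2_{L^2(\mathbb{S})}$, and finish with the coercivity bound $\big|(F\phi_z,\phi_z)_{L^2(\mathbb{S})}\big|\geq\mu\big\||F|^{1/2}\phi_z\big\|^2_{L^2(\mathbb{S})}$ from Section \ref{DSM1}. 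No gaps to report.
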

The above theorem {\color{black}suggests} that the Tikhonov indicator $W_{\text{TDSM}}(z)$ {\color{black}decays at least as fast as} the standard DSM indicator $W_{\text{DSM}}(z)$ {\color{black}(up to order $\ep$)} and therefore we expect that $W_{\text{TDSM}}(z)$ has to approximately decay at a rate of $\mathrm{dist}(z,D)^{1-d}$ by Theorem \ref{decay} which implies that   
\[\big\| {\color{black} P_{\alpha,\ep}}(|F|)\phi_z \big\|^2_{L^2(\mathbb{S})}  \leq  C \mathrm{dist}(z,D)^{1-d} + \mathcal{O}( \ep ) \quad \text{for } \,\, z \in \R^d \setminus \ov{D}\, \]
{\color{black} as $\ep \to 0$}. Notice that again the analysis of this section requires that the far-field operator is injective with dense range and is normal which holds for sound soft/sound hard, isotropic, and impedance type scatterers. Therefore, the above estimate for the indicator $W_{\text{TDSM}}(z)$ holds for these scatterers as well.

\section{Numerical validation of the indicators} \label{numerics}
In this section, we provide some numerical examples for the three indicator functions studied in the previous sections. To this end, we will use synthetic far-field data in our numerical experiments. All of our experiments are done with MATLAB 2018a on an iMac with a 4.2 GHz Intel Core i7 processor with 8GB of memory. We will denote the discretized far-field operator as 
\[{\bf F} = \big[ u^{\infty}(\hat{x}_i, \hat{y}_j ) \big]_{i,j = 1}^M \quad \text{ where } \, \, \hat{x}_i, \hat{y}_j  \,\, \text{ are in } \,\, \mathbb{S}=\text{unit circle/sphere}\]
where $M$ is the number of incident and observation directions. Here the incident and observation directions are uniformly spaced on the unit circle/sphere.  We give examples with random noise added to the simulated data for $u^{\infty}(\hat{x}_i, \hat{y}_j )$. The random noise level is given by $\delta$ where the noise is added to the matrix such that 
\[{\bf F}_{\delta} = \left[u^{\infty}(\hat{x}_i, \hat{y}_j ) \left( 1 +\delta E_{i,j} \right) \right]_{i,j=1}^{M}\,,\]
 {with the complex-valued random matrix} $\mathbf{E}$ {satisfying } $\| \mathbf{E} \|_2 =1\,.$ Here $\|\, \cdotp \|_2$ denotes the spectral norm of a given matrix.

To evaluate our indicators we also define the vector 
\[{\bf \phi}_z = [\mathrm{e}^{ - \mathrm{i} k \hat{x}_j \cdot z}]_{j=1}^M\,.\]
Therefore, we have that the DSM indicator $ \left| (F \phi_z , \phi_z )_{L^2(\mathbb{S})} \right| $ is computed {numerically} by 
\[{\widetilde{W}_{\text{DSM}}(z) } = \big| {{\bf \phi}^*_z}{\bf F}_{\delta}   {\bf \phi}_z \big|\]
where the $^*$ denotes the conjugate transpose and {$\widetilde{W}_{\text{DSM}}(z)$ denotes the approximation of $W_{\text{DSM}}(z)$}. For the indicator function based on the FM we need to {numerically} compute the absolute value of ${\bf F}_{\delta}$ to the half power. To do so, we  let 
\[{\bf F}_{\delta} = {\bf U}{\bf S}{\bf V}^*\]
be the singular value decomposition of the matrix. Using the identity $|{\bf F}_{\delta} |^{1/2} = ({\bf F}_{\delta} ^*{\bf F}_{\delta} )^{1/4}$ we have that $|{\bf F}_{\delta} |^{1/2} = {\bf V}{\bf S}^{1/2}{\bf V}^*$. Therefore, we can numerically {approximate} the indicator $\left( |F|^{1/2} \phi_z , \phi_z \right)_{L^2(\mathbb{S})}$ using the singular value decomposition such that 
\[{\widetilde{W}_{\text{FDSM}}(z)} =   \sum\limits_{j=1}^{M} \sqrt{ s_j }  \big| {{\bf \phi}^*_z}  {\bf v}_j \big|^2\]
where $(s_j , {\bf v}_j ) \in \R_{> 0} \times \C^M$ are the singular values and right singular vectors of ${\bf F}_{\delta}\,.$ 

Now for the indicator based on the Tikhonov regularization we need to construct a polynomial $P_{\alpha}(t)$ such that for all $t \in [0 , \| {\bf F}_{\delta} \|]$ approximates the function $\Gamma_{\alpha} (t)$ defined in the previous section. Due to the high condition number for polynomial interpolation we notice that since $\Gamma_{\alpha} (0) = 0$ we construct an approximating polynomial that has zero as a root. 
This gives one coefficient less to compute. In our experiments we compute ${\color{black} P_{\alpha,\ep}}(t)$, where 
\[{\color{black} P_{\alpha,\ep}}(t) =    \sum\limits_{k=1}^{3} c_k t^k \quad \text{ such that } \quad{\color{black} P_{\alpha,\ep}}(t_\ell) = \frac{ \sqrt{t_\ell}}{\alpha + t_\ell} \]
with $\ell = 1, \ldots , 10$ and $t_\ell$ are equally spaced point in the interval $[0 , \| {\bf F}_{\delta} \|]$. One normally attempts to pick an optimal regularization parameter $\alpha$ via Morozov's discrepancy principle. Here, we only require $\alpha$ to be positive for Theorem \ref{TDMS} to hold which is all one needs to establish the decay of the indicator function. 
In general, the parameter $\alpha$ is taken to be small such that $\Gamma_{\alpha} (t) \approx 1/\sqrt{t}$ holds in order to make the mapping \eqref{tik-solu} the approximate solution of \eqref{fm-equ}. 
To this end, we fix $\alpha = 10^{-2}$ ad hoc in our experiments and use a spectral cut-off to compute the coefficients $c_k$ where the cut-off parameter is fixed to be $10^{-8}$ in all the examples. Once ${\color{black} P_{\alpha,\ep}}(t)$ is computed, we can numerically approximate the indicator $\big\| {\color{black} P_{\alpha,\ep}}(|F|)\phi_z \big\|^2_{L^2(\mathbb{S})}$ such that 
\[{\widetilde{W}_{\text{TDSM}}(z)} =   \sum\limits_{j=1}^{M} {\color{black} P^2_{\alpha,\ep}}({ s_j })  \big| {{\bf \phi}^*_z}  {\bf v}_j \big|^2\]
where we have again used the singular values $ s_j$ and right singular vectors ${\bf v}_j$. Notice that ${\color{black} P_{\alpha,\ep}}(t)$ need only be constructed once and is continuously used to evaluate the indicator. 
Two interesting questions arise about the implementation of this method: how to construct the polynomial ${\color{black} P_{\alpha,\ep}}(t)$ and how should one choose the regularization parameter? These two questions are not discussed here, but could lead to an interesting numerical and analytical investigation in the future.

Here we normalize the three indicator functions by dividing by their $L^{\infty}$ norms. In our experiments we see that the functions will be approximately one near the boundary of the scatterer and decay as the sampling point $z$ moves away from the {\color{black}scatterer}. Therefore, one can choose a level curve to recover the boundary of the scatterer which should be taken between $0.8$ and $0.9$ for the 2D case and between $0.6$ and $0.8$ for the 3D case from our experiments. 
In the following subsections we see that in some cases the indicators $\widetilde{W}_{\text{FDSM}}(z)$  and $\widetilde{W}_{\text{TDSM}}(z)$ seem to give a better contrast in the reconstructions which gives a more detailed approximation of the shape of the scatterer.

\subsection{Numerical results in two dimensions}
Here we consider reconstructing small isotropic scatterers (i.e. $A=I$) in $\R^2$. We have that the scattered field is given by the solution to the Lippmann-Schwinger integral equation (see for e.g. \cite{coltonkress})
\[u^s(x,\hat{y}) = \frac{\mathrm{i} }{4} k^2 \int_{D} \left( n(w) - 1\right) H^{(1)}_0 (k | x-w |)  u(w ,\hat{y}) \, \mathrm{d}w\]
where $H^{(1)}_0$ is the first kind Hankel function of order zero and $u$ is the total field for \eqref{direct1}--\eqref{direct2}. We consider the boundary of the domain $\partial D = r(\theta) \left(\cos (\theta), \sin(\theta) \right)$ where $r(\theta)$ is given by 
\begin{eqnarray*}
r(\theta) &=&\frac{1}{5} \left(2+\frac{3}{10}\cos(3\theta) \right) \quad \text{pear-shaped domain,} \\
r(\theta) &=& \frac{1}{5}\left(2+\frac{3}{10}\cos(5\theta) \right)\quad \text{star-shaped domain,} \\
r(\theta) &=& \frac{2}{5}\sqrt{\frac{1}{2}\sin(\theta)^2+\frac{1}{10}\cos(\theta)^2} \quad \text{peanut-shaped domain.} 
\end{eqnarray*}

\noindent{See Figure \ref{ref-domain2d} for a plot of the pear-shaped, star-shaped and peanut-shaped domain.} 

\begin{figure}
\includegraphics[width=15cm]{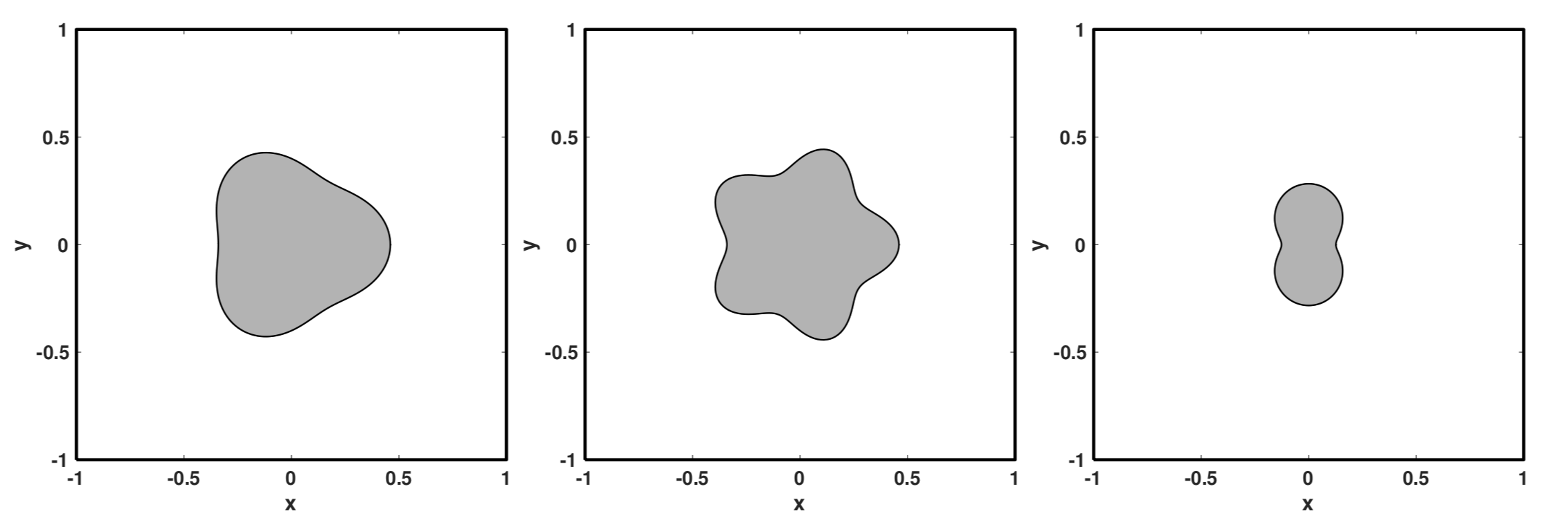}
\caption{{Plots of the pear-shaped, star-shaped, and peanut-shaped scatterers.}}
\label{ref-domain2d}
\end{figure}

Since the domains under consideration are small it is well known that the Born approximation, which is the first term in the Neumann series solution to the Lippmann-Schwinger integral equation, approximates the scattered field. It is given by 
\[u^{s}({x},\hat{y}) \approx \frac{\mathrm{i} }{4} k^2 \int_{D} \left( n(w) - 1\right) H^{(1)}_0 (k | x-w |)   \mathrm{e}^{\mathrm{i} k {w}  \cdot \hat{y} } \, \mathrm{d}w\,.\]
This gives that the far-field pattern can be approximated by 
\[u^{\infty}(\hat{x},\hat{y}) \approx k^2 \int_{D} \left( n(w) - 1\right) \mathrm{e}^{\mathrm{i} k {w}  \cdot (\hat{y} -  \hat{x})} \, \mathrm{d}w\]
which is a good approximation for the far-field pattern since $|D| \ll 1$. To evaluate the integral we use the built-in numerical 2D integrator `integral2' in MATLAB. 

In Figures \ref{recon1}--\ref{recon3} we fix the wave number $k=10$ and the refractive index $n=1/2$. We add $5\%$ random noise to the approximated far-field pattern which corresponds to $\delta =0.05$. 
There are $M=32$ uniformly spaced incident and observation directions given by $\hat{y}_j=\hat{x}_j=\left(\cos(\theta_j), \sin(\theta_j) \right)$ where $\theta_j$ are 
uniformly spaced points in $[0,2\pi)$. The sampling region is given by $[-1 , 1] \times [-1 , 1]$ where the sampling points are taken to be $100 \times 100$ equally spaced points in the sampling region. 
Computing the three indicator functions takes roughly three seconds for each of the domains under consideration. In Figures \ref{recon1}--\ref{recon3} we plot the three 
indicator functions studied in the previous section. We see in Figures \ref{recon1}--\ref{recon3} that the two indicators $\widetilde{W}_{\text{FDSM}}(z)$ and $\widetilde{W}_{\text{TDSM}}(z)$ give comparable results to what is given by $\widetilde{W}_{\text{DSM}}(z)$. 
\begin{figure}
\includegraphics[width=15cm]{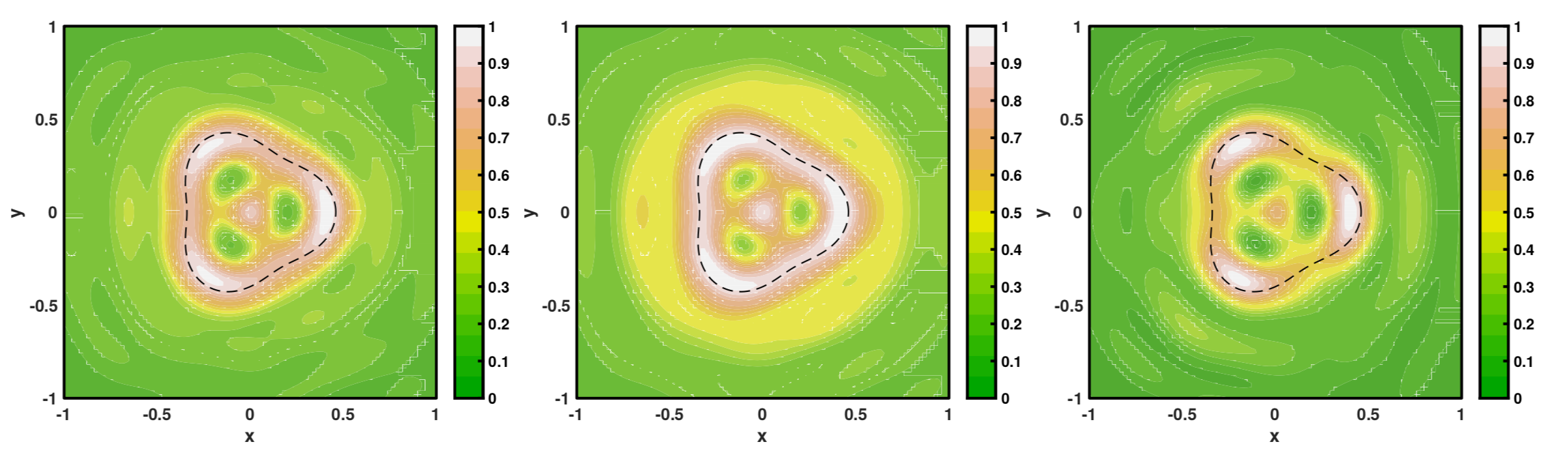}
\caption{{Reconstructions of the pear-shaped domain using the three indicator functions  $\widetilde{W}_{\text{DSM}}(z)$, $\widetilde{W}_{\text{FDSM}}(z)$, and $\widetilde{W}_{\text{TDSM}}(z)$ with noise level $\delta=5\%$.}}
\label{recon1}
\end{figure}

\begin{figure}
\includegraphics[width=15cm]{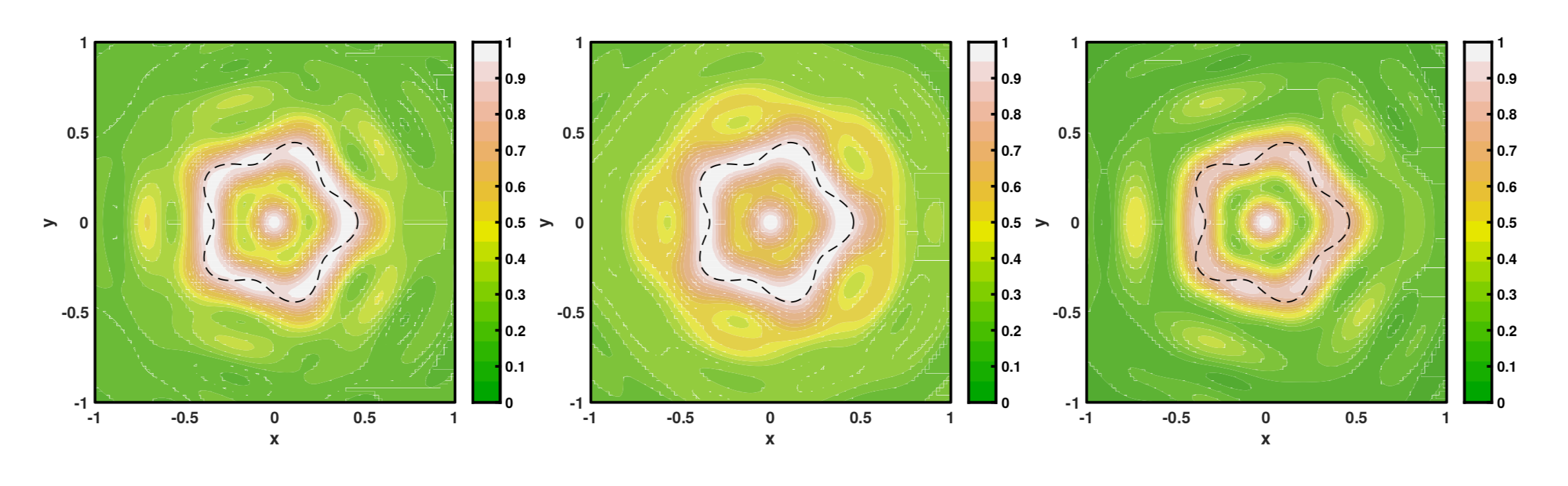}
\caption{{Reconstructions of the star-shaped domain using the three indicator functions  $\widetilde{W}_{\text{DSM}}(z)$, $\widetilde{W}_{\text{FDSM}}(z)$, and $\widetilde{W}_{\text{TDSM}}(z)$ with noise level $\delta=5\%$.}}
\label{recon2}
\end{figure}

\begin{figure}
\includegraphics[width=15cm]{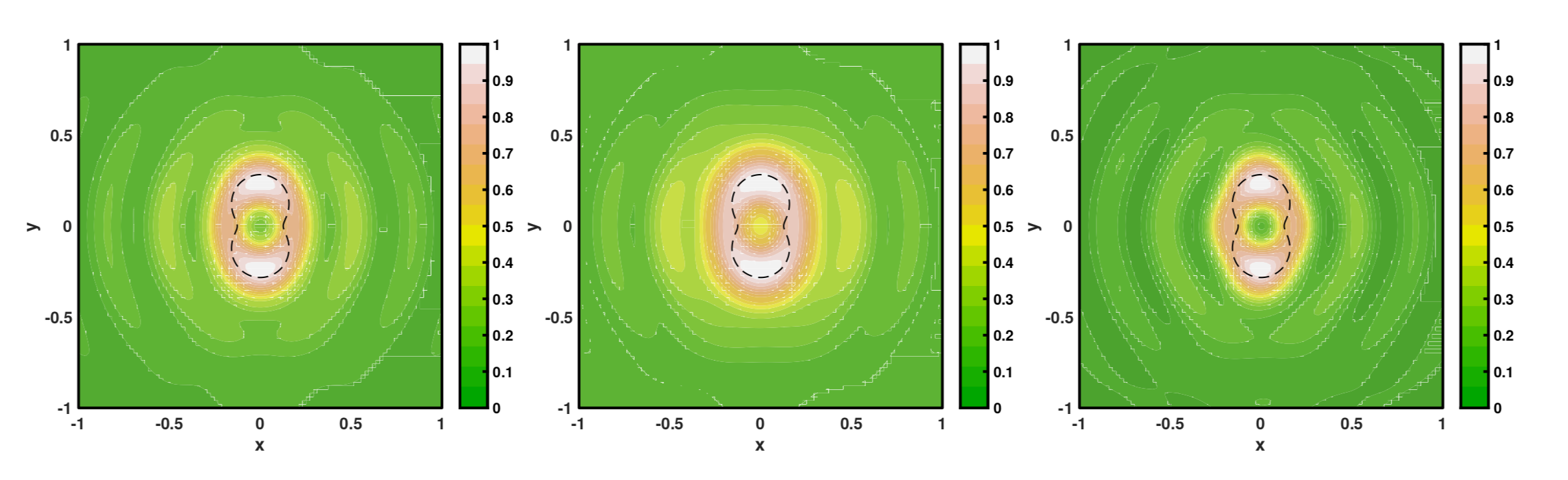}
\caption{Reconstructions of the peanut-shaped domain using the three indicator functions  $\widetilde{W}_{\text{DSM}}(z)$, $\widetilde{W}_{\text{FDMS}}(z)$, and $\widetilde{W}_{\text{TDMS}}(z)$ with noise level $\delta=5\%$.}
\label{recon3}
\end{figure}

We additionally show the robustness of the newly proposed indicator based on the Tikhonov regularization for the peanut-shaped scatterer. We use the noise levels $\delta=1\%$, $5\%$, and $10\%$ for the far-field data and show the results for the indicator approximation $\widetilde{W}_{\text{TDSM}}(z)$ in Figure \ref{noise2d}. As we can see, the reconstructions of the peanut-shaped scatterer are robust with respect to these noise levels. Almost no difference can be seen visually. {\color{black} It is well-known that the robustness is tied to the choice of the regularization parameter which is not studied here. However, in practice one would like to pick an optimal $\alpha$ depending on the noise level $\delta$ which is left for future investigation.}

\begin{figure}
\includegraphics[width=15cm]{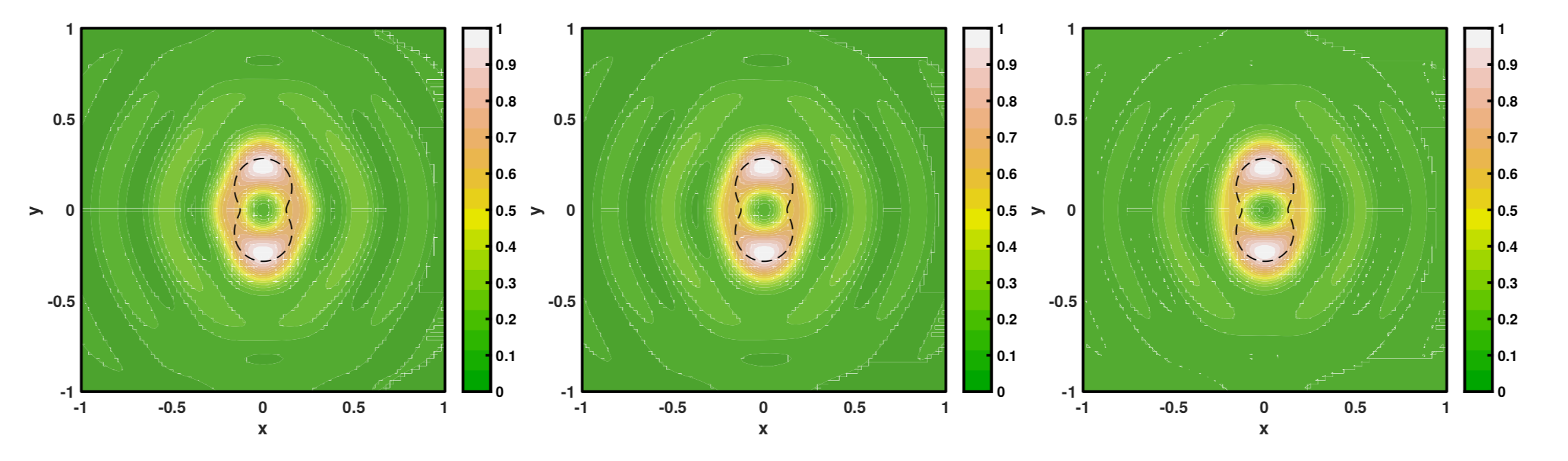}
\caption{Reconstructions of the peanut-shaped domain using the indicator function $\widetilde{W}_{\text{TDSM}}(z)$ with noise level $\delta=1\%$, $5\%$, and $10\%$.}
\label{noise2d}
\end{figure}

\subsection{Numerical results in three dimensions}
We now give some numerical examples in $\R^3$ for the case of $A= \alpha I$ where both $\alpha$ and  $n$ are constants. In order to compute the far-field pattern we use boundary integral 
equations derived from Green's representation formula for the scattered field $u^s$ in $\R^3 \setminus \ov{D}$ and the total field $u$ in $D$ (see \cite{BIE-ref} for details). 
This gives a 2$\times$2 system of boundary integral equations on the boundary $\partial D$ for the densities $\varphi_1$ and $\varphi_2$. 
We employ a boundary element collocation method to solve the system numerically. Here, the far-field operator is approximated for simplicity by constant interpolation over a triangulation of the unit sphere. 
For a possible higher-order approximation of it, we refer the reader to \cite{int-approx}. 

In our numerical examples the computed densities $\varphi_1$ and $\varphi_2$ depend on the incident direction $\hat{y}$ since each incident direction corresponds to a new right-hand side for the  boundary integral equations. The corresponding far-field pattern $u^\infty$ can be derived from Green's representation and is given by 
\[u^\infty(\hat{x}, \hat{y} )= \frac{1}{4 \pi} \int_{\partial D} \varphi_1 (w, \hat{y}) {\partial_{\nu(w)} } \mathrm{e}^{-\mathrm{i}k w \cdot \hat{x} } -  \varphi_2 (w, \hat{y}) \mathrm{e}^{-\mathrm{i}k w \cdot \hat{x}} \, \mathrm{d}s(w)\,.\]
The boundary of the domain is given in spherical coordinates such that 
\[\partial D = r(\phi) \left(\sin(\phi) \cos (\theta), \sin(\phi) \sin(\theta) , \cos(\phi) \right)\]
 where $r(\phi)$ is given by $r(\phi)=1$, 
\begin{eqnarray*}
r(\phi) = \frac{3}{2} \sqrt{\cos^2(\phi) +\frac{1}{4} \sin^2(\phi) } \quad \text{and } \quad r(\phi) &=&\frac{3}{5} \sqrt{\frac{17}{4} +2 \cos(3\phi) } 
\end{eqnarray*}
which represent a sphere, peanut-shaped, and acorn-shaped domain, respectively. See Figure \ref{ref-domain} for a triangularization of their surfaces. 
{Note that some of the far-field data have already been used in \cite[Section 8]{pele}.}
\begin{figure}
\centering
\includegraphics[width=15cm]{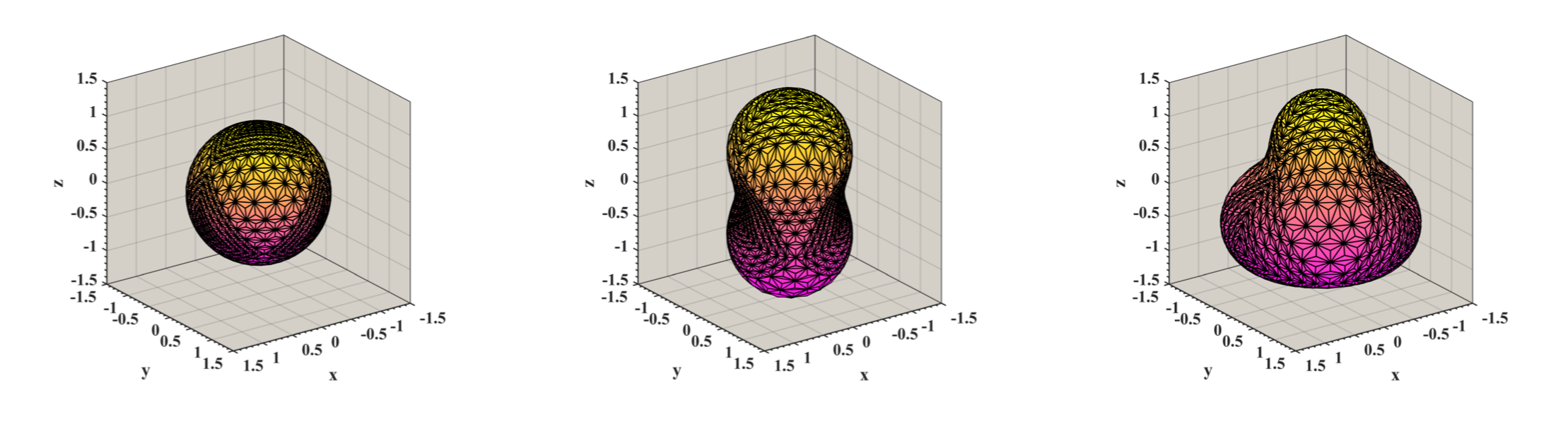}
\caption{Triangularization of the sphere, peanut-shaped, and acorn-shaped domains with 512 faces.}
\label{ref-domain}
\end{figure}

In Figures \ref{recon4}--\ref{recon6} we fix the wave number $k=2$ with $n=1/2$ and $A=2I$ in our examples. We add $5\%$ random noise to the approximated far-field pattern which corresponds to $\delta =0.05$. Here, we use $M=258$ incidence and observation directions that are {`almost' uniformly spaced on the unit sphere (see \cite[Appendix A.1]{akak} for details)}. We plot the three indicators on the $y - z$ plane such that the sampling region is given by $[-2 , 2] \times [-2 , 2]$ where the sampling points are taken to be $100 \times 100$ equally spaced points in the sampling region. Computing the three indicator functions takes roughly five to six seconds for each of the domains under consideration. In Figures  \ref{recon4}--\ref{recon6}  we contour plots of the three indicator functions for spherical, peanut-shaped, and acorn-shaped domain. Again we see that the three indicators give comparable reconstructions for the sphere and peanut-shaped obstacle. In Figure \ref{recon6} we see that the indicator $\widetilde{W}_{\text{TDSM}}(z)$ gives a more detailed reconstruction of the acorn-shaped scatterer. 

\begin{figure}
\includegraphics[width=15cm]{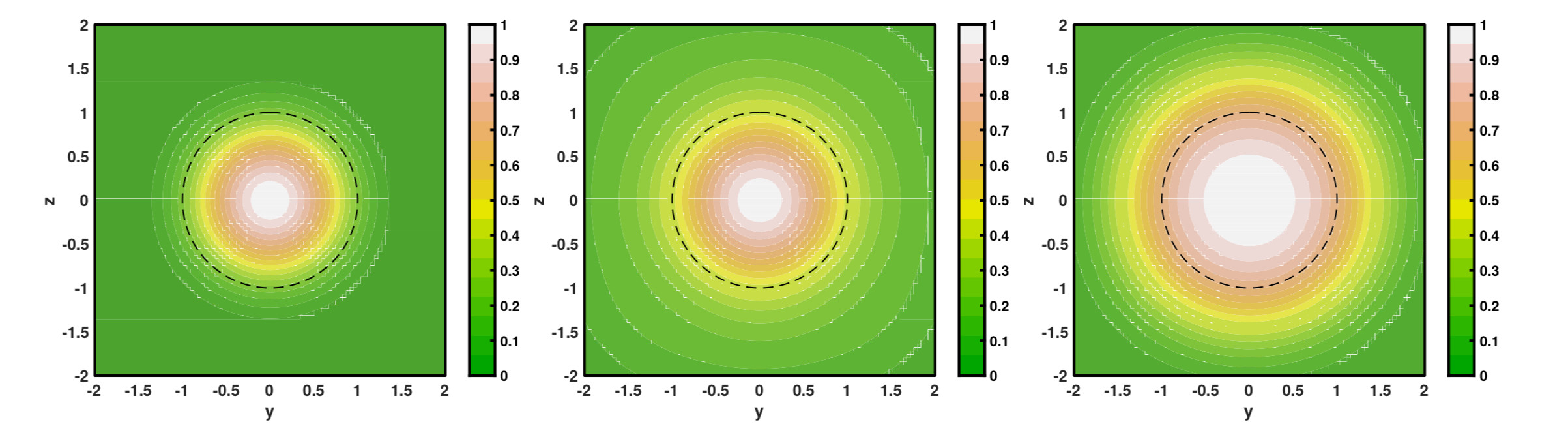}
\caption{{Reconstructions of the spherical domain using the three indicator functions  $\widetilde{W}_{\text{DSM}}(z)$, $\widetilde{W}_{\text{FDSM}}(z)$, and $\widetilde{W}_{\text{TDSM}}(z)$ with noise level $\delta=5\%$.}}
\label{recon4}
\end{figure}

\begin{figure}
\includegraphics[width=15cm]{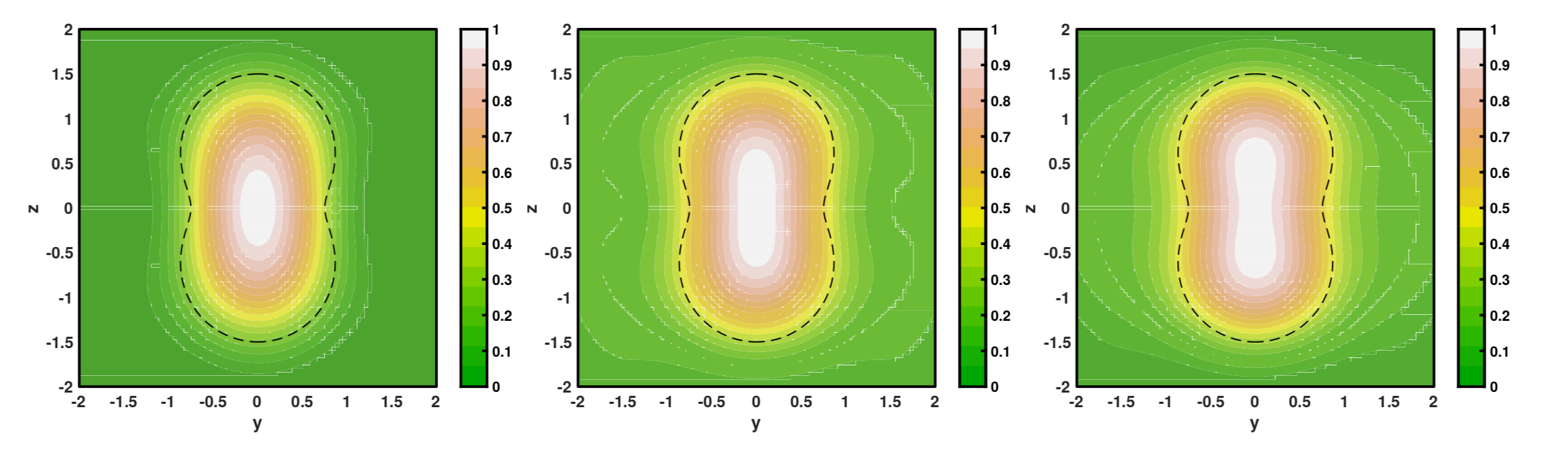}
\caption{{Reconstructions of the peanut-shaped domain using the three indicator functions  $\widetilde{W}_{\text{DSM}}(z)$, $\widetilde{W}_{\text{FDSM}}(z)$, and $\widetilde{W}_{\text{TDSM}}(z)$ with noise level $\delta=5\%$.}}
\label{recon5}
\end{figure}

\begin{figure}
\includegraphics[width=15cm]{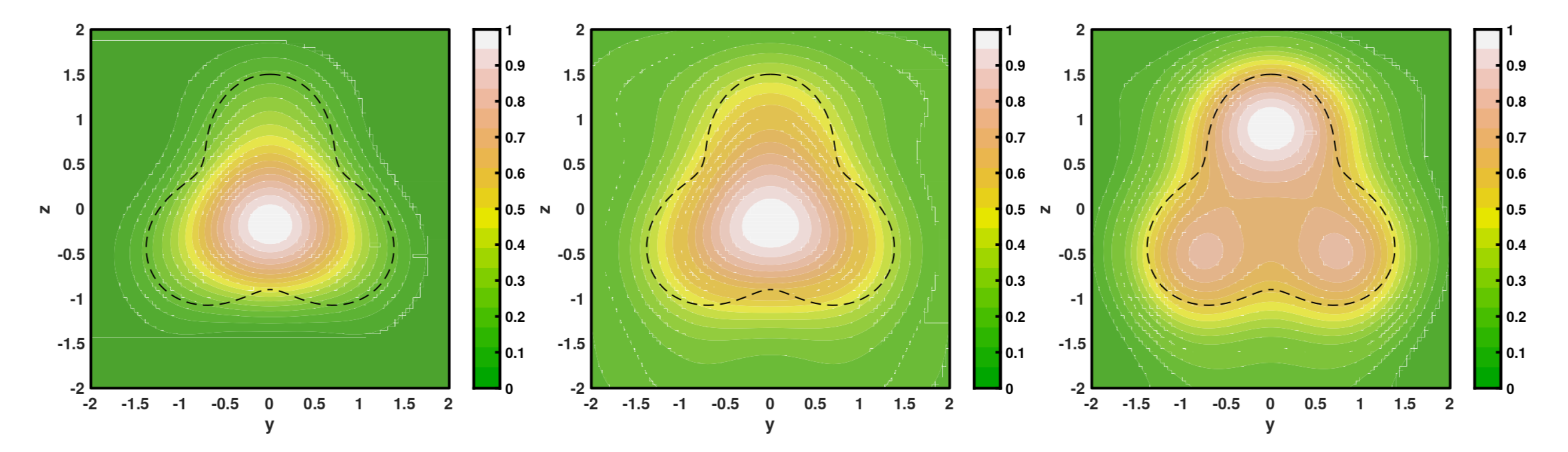}
\caption{{Reconstructions of the acorn-shaped domain using the three indicator functions  $\widetilde{W}_{\text{DSM}}(z)$, $\widetilde{W}_{\text{FDSM}}(z)$, and $\widetilde{W}_{\text{TDSM}}(z)$ with noise level $\delta=5\%$.}}
\label{recon6}
\end{figure}

Finally, we show the reconstructions for the peanut-shaped domain by the newly proposed indicator based on the Tikhonov regularization using the noise levels $\delta=1\%$, $5\%$, and $10\%$ for the far-field data. The results for the indicator approximation $\widetilde{W}_{\text{TDSM}}(z)$ are given in Figure \ref{noise3d}. Here, we can see that we obtain the best reconstruction for the noise level $1\%$. The reconstruction deteriorates significantly as the noise level increases.

\begin{figure}
\includegraphics[width=15cm]{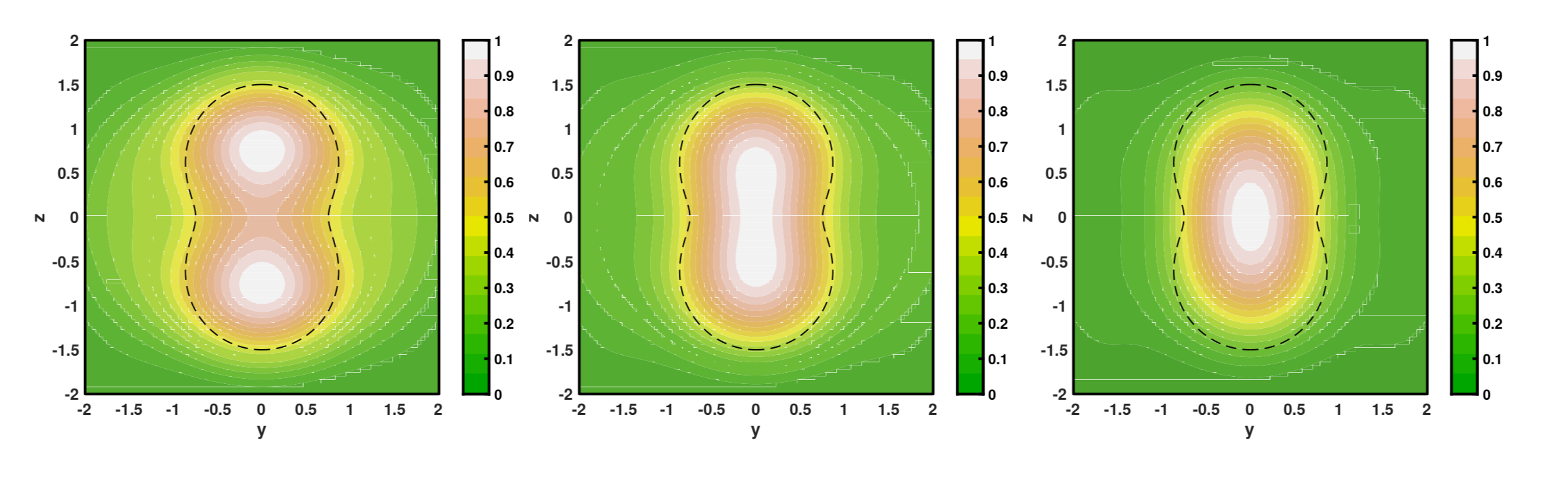}
\caption{Reconstructions of the peanut-shaped domain using the indicator function $\widetilde{W}_{\text{TDSM}}(z)$ with noise level $\delta=1\%$, $5\%$, and $10\%$.}
\label{noise3d}
\end{figure}

\section{Summary and outlook}\label{summary}
In this article, we have studied three DSM indicator functions using the theoretical basis of the FM for anisotropic materials with real-valued coefficients. 
The equivalence of two indicators previously studied is shown as well as a new indicator based on the Tikhonov regularization applied to the far-field equation for the factorization method is proposed.
Precisely, we are able to prove that one of the indicator functions decays as the sampling point moves away from the scatterer. Note that the results here are stated for anisotropic materials but the analysis is valid whenever the scattering operator is unitary and the far-field operator is injective with dense range which means that these results hold for a wide range of scattering objects. Numerical examples show the validity of the three indicator functions. In our experiments we are able to reconstruct the scatterers in seconds which gives that these DSMs are computationally cheap and  rigorously justified. The analysis in this article depends mainly on the orthonormal spectral decomposition of the far-field operator.  It is known that for the case of complex-valued coefficients the far-field operator fails to be normal and therefore does not have the orthonormal spectral decomposition which is vital in our approach. However, one can still apply the FM 
either to the operators $\mathrm{Im}(F) $  or $ F_{\sharp} = |\mathrm{Re}(F) | + |\mathrm{Im}(F) |$ (see for e.g. \cite{fm-gbc,akak,kirschbook,armin}). By definition one has that these operators are self-adjoint and compact which implies they have an orthonormal spectral decomposition. One can then construct DSM indicator functions using either $\mathrm{Im}(F) $  or $ F_{\sharp}$ also for the case of scatterers with complex-valued coefficients.


\end{document}